\newtheorem{theorem}{Theorem}
\newtheorem{lemma}{Lemma}
\newtheorem{corollary}[lemma]{Corollary}
\newtheorem{observation}[lemma]{Observation}
\newcommand{\myurl}[2]{\url{#1}\xspace}
\newcommand{\useless}[1]{}
\newcommand{\calO}{\mathcal{O}\xspace}
\newcommand{\bbX}{\mathbb{X}\xspace}
\begin{document}
\title{A Brief Note on Single Source Fault Tolerant Reachability}
\author{
    Daniel Lokshtanov\thanks{University of California Santa Barbara, USA. \texttt{daniello@ii.uib.no}
    } \and 
    Pranabendu Misra\thanks{Max-Planck Institute for Informatics, Saarbrucken, Germany. \texttt{pranabendu.misra@ii.uib.no}
    } \and
    Saket Saurabh\thanks{Institute of Mathematical Sciences, Chennai, India. \texttt{saket@imsc.res.in} 
    } \and
    Meirav Zehavi\thanks{Ben-Gurion University, Beersheba, Israel. \texttt{meiravze@bgu.ac.il}
    } 
}
%    \begin{titlepage}
%    \thispagestyle{empty}

%%\title{Brief Announcement: Single Source Fault Tolerant Reachability}
%%\title{A Short Note on Single Source Fault Tolerant Reachability}
%
%\author{Daniel Lokshtanov}
%\affiliation{%
%    \institution{University of California Santa Barbara, USA}
%    %\streetaddress{8600 Datapoint Drive}
%    %\city{San Antonio}
%    %\state{Texas}
%    %\postcode{78229}
%}
%\email{daniello@ii.uib.no}
%
%\author{Pranabendu Misra}
%\affiliation{%
%    \institution{Max Planck Institute for Informatics, Saarbrucken, Germany}
%}
%\email{pmisra@mpi-inf.mpg.de}
%
%\author{Saket Saurabh}
%\affiliation{%
%    \institution{Institute of Mathematical Sciences, HBNI, Chennai, India}
%}
%\email{saket@imsc.res.in}
%
%\author{Meirav Zehavi}
%\affiliation{%
%    \institution{Ben-Gurion University, Beersheba, Israel.}
%}
%\email{meiravze@bgu.ac.il}

%    \end{titlepage}
\date{}

\maketitle
    \begin{abstract}
    Let $G$ be a directed graph with $n$ vertices and $m$ edges, and let $s \in V(G)$ be a designated source vertex. We consider the problem of single source reachability (SSR) from $s$ in presence of failures of edges (or vertices). Formally, a spanning subgraph $H$ of $G$ is a {\em $k$-Fault Tolerant Reachability Subgraph ($k$-FTRS)} if it has the following property. For any set $F$ of at most $k$ edges (or vertices) in $G$, and for any vertex $v\in V(G)$, the vertex $v$ is reachable from $s$ in $G-F$ if and only if it is reachable from $s$ in $H - F$. %The graph $H$ may be used for testing the reachability of a vertex $v$ from $s$, as well as obtaining an alternate route from $s$ to $v$, if one exists, upon the failure of $F$.
    Baswana et.al. [STOC 2016, SICOMP 2018] showed that in the setting above, for any positive integer $k$, we can compute a $k$-FTRS with $2^k n$ edges.
    In this paper, we give a much simpler algorithm for computing a $k$-FTRS, and  observe that it extends to higher connectivity as well. Our results follow from a simple application of \emph{important separators}, a well known technique in Parameterized Complexity.
\end{abstract}
    
% !TeX root = ./main.tex

\section{Introduction}\label{sec:intro}

Fault tolerant data structures aim to capture properties of real world networks, which are often prone to a small number of failures. Such data structures allow us to test various properties of the network after failures have occurred, and the repairs are awaited. The problem is modeled as a directed graph (digraph) $G$ where a small number of edges (or vertices) have failed, and a parameter $k$ is used as a bound on the maximum number of failures that may occur at a time. 
In this paper, we consider the problem of deciding the reachability of all vertices $v \in V(G)$ from a designated source vertex $s \in V(G)$ upon the failure of any $k$ edges (or vertices) in the input graph $G$. Specifically, our objective is to construct a sparse spanning subgraph $H$ of $G$ that preserves all reachability relationships from the source vertex $s$ upon the failure of any $k$ edges (or vertices) in $G$. More formally, we seek a spanning subgraph $H$ of $G$ with the following property: For any set $F$ of at most $k$ edges (or vertices) in $G$, and for any vertex $v\in V(G)$, there is a path from $s$ to $v$ in $G-F$ if and only if there is a path from $s$ to $v$ in $H - F$.
Such a graph $H$ is called a \emph{$k$-Fault Tolerant Reachability Subgraph ($k$-FTRS)}.
Observe that, beyond the question of deciding the reachability of a vertex $v$, the graph $H$ may also be used to find an alternate route from $s$ to $v$, if one exists, upon the failure of the edges in $F$. Now, the problem is formally defined as follows. 
Given as input a digraph $G$, a designated source vertex $s \in V(G)$ and an integer $k$, we must output a spanning subgraph $H$ of $G$ that is a $k$-FTRS.

Recently, Baswana et al.~\cite{BaswanaSICOMP18} presented an algorithm for computing a $k$-FTRS. Specifically, their algorithm runs in time $\calO(2^kmn)$ for a digraph $G$ of $n$ vertices and $m$ edges, and produces a $k$-FTRS where the in-degree of any vertex is upper bounded by $2^k$. Their algorithm is based on the notion of farthest min-cut that was introduced by Ford and Fulkerson~\cite{FordFulkerson}. They suggest that their methods may be of independent interest in other problems.
This is indeed so, for the notion of \emph{important separators}, which generalizes the notion of furthest cuts, is a well-known technique in Parameterized Complexity~\cite{PCBook}.

The notion of \emph{important separators} was introduced by Marx~\cite{MarxIPEC04} to give an FPT algorithm for {\sc Multiway Cut}. Subsequently, important separators and techniques based on them have been used to resolve the complexity of several important problems such as {\sc Directed Feedback Vertex Set}~\cite{ChenJACM08}, {\sc Multicut}~\cite{MarxSTOC11}, {\sc Directed Multiway Cut}~\cite{MarxSODA12}, {\sc Almost 2-SAT}~\cite{RazgonO09}, {\sc Parity Multiway Cut}~\cite{LokshtanovR12} and a linear-time FPT algorithm for {\sc Directed Feedback Vertex Set}~\cite{LokshtanovR018}. 
Informally speaking, important separators capture the entire collection of furthest cuts in a graph that have a bounded cardinality. We refer the reader to the textbook of Cygan et al.~\cite{PCBook} for an introduction to important separators, and more generally to the various tools and techniques in Parameterized Complexity.

Using the notion of important separators, we give a very simple and conceptually appealing algorithm for computing a $k$-FTRS. Indeed, we generalize the problem slightly. Given a digraph $G$, a designated source vertex $s \in V(G)$, an integer $\lambda$ and an integer $k$, we output a spanning subgraph $H$ of $G$ such that for any set $F$ of at most $k$ edges (or vertices) and for any vertex $v \in V(G)$, there are $\lambda$ edge-disjoint paths from $s$ to $v$ in $G - F$ if and only if there are $\lambda$ such paths in $H - F$. The graph $H$ is called a \emph{$(\lambda, k)$-Fault Tolerant Reachability Subgraph ($(\lambda, k)$-FTRS)}. As before, $H$ may be used for both testing the $\lambda$-connectivity of a vertex $v$ from $s$, as well as for obtaining an alternate collection of $\lambda$ edge-disjoint paths from $s$ to $v$, if they exist, after the failure of up to $k$ edges (or vertices). 
In particular, we obtain the following theorem.
\begin{theorem}\label{thm:k-ftss}
    There is a $(\lambda,k)$-FTRS where each vertex has in-degree at most $(k+\lambda) 4^{k+\lambda}$.
    Further, this graph can be constructed in time $\calO(4^{k+\lambda}(k+\lambda)^2(m+n)m)$.
\end{theorem}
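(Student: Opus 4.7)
My plan is to construct $H$ by selecting, for each vertex $v \in V(G) \setminus \{s\}$, a small set $E_v$ of incoming edges via important separators, and setting $H = (V, \bigcup_v E_v)$. Concretely, I would enumerate all important $(s, v)$-edge-separators in $G$ of size at most $k + \lambda$ and let $E_v$ consist of the edges of these separators that enter $v$. Using the standard bound of $4^p$ on the number of important $(X, Y)$-edge-separators of size at most $p$ in a directed graph, together with the fact that each such separator has at most $p$ edges, we immediately get $|E_v| \leq (k+\lambda) \cdot 4^{k+\lambda}$, which yields the in-degree bound. The stated running time will follow from the $\calO(4^p \cdot p \cdot (m+n))$ enumeration algorithm for important separators, applied per vertex with an additional $\calO((k+\lambda) m)$ factor for separator processing.

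To prove that $H$ is a $(\lambda, k)$-FTRS, the direction ``$\lambda$ edge-disjoint paths in $H - F$ implies $\lambda$ edge-disjoint paths in $G - F$'' is immediate from $H \subseteq G$. For the converse I would argue by contradiction: suppose that for some $F$ with $|F| \leq k$ and some $v$, there are $\lambda$ edge-disjoint $s$-to-$v$ paths in $G - F$ but fewer than $\lambda$ in $H - F$. Fix a minimum $(s, v)$-edge-cut $C \subseteq E(H) \setminus F$ in $H - F$, so $|C| \leq \lambda - 1$, and let $A$ be the set of vertices reachable from $s$ in $H \setminus (F \cup C)$. Then $(A, \overline{A})$ is an $(s, v)$-cut in $H$ with $E_H(A, \overline{A}) \subseteq F \cup C$, in particular of size at most $k + \lambda - 1$. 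The target is to show that $E_H(A, \overline{A}) = E_G(A, \overline{A})$: this would make $(A, \overline{A})$ an $(s, v)$-cut in $G - F$ of size at most $|C| < \lambda$, contradicting the assumption on $G - F$.

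The equality $E_H(A, \overline{A}) = E_G(A, \overline{A})$ I would establish by a contradiction argument nested inside. If some $e = (x, y)$ lies in $E_G(A, \overline{A}) \setminus E_H(A, \overline{A})$, then $e \notin F \cup C$, because $C$ is not a cut in $G - F$, so an $s$-to-$v$ path in $G - F - C$ exists and must cross $(A, \overline{A})$ at such an $e$. The key step is to force $e \in E_y$ via the domination property of important separators: applied to the $(s, y)$-cut $(A, \overline{A})$ in $G$ (possibly working in the reverse graph $\vec{G}^R$ so that the important-separator bookkeeping aligns with incoming edges of $y$), this yields an important $(s, y)$-separator $S^*$ with $R(S^*) \supseteq A$ and size at most $|E_G(A, \overline{A})|$, whose enumerated edges place $e$ in $E_y \subseteq E(H)$, the desired contradiction. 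The main obstacle is precisely this last step: guaranteeing that $S^*$ has size at most $k + \lambda$ and actually witnesses the edge $e$ rather than some unrelated cut. The natural route is a maximality refinement of $A$---replacing it with the $s$-reachability set of an optimal important separator in $H$---and then an exchange or uncrossing argument transferring the maximality to $G$, so that the edges of $E_G(A, \overline{A})$ can be identified with those of an enumerated important separator for $y$.
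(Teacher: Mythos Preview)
Your construction has a genuine bug, not merely a proof gap. Take any vertex $y$ whose minimum $(s,y)$-cut in $G$ has size larger than $k+\lambda$: then there are no $(s,y)$-separators of size at most $k+\lambda$ at all, hence $\bbX_{k+\lambda}(s,y)=\emptyset$ and your $E_y$ is empty. In your $H$ the vertex $y$ has in-degree~$0$ and is unreachable from $s$, while in $G-F$ it retains at least $\lambda$ edge-disjoint paths from $s$ for every $|F|\le k$. So this $H$ is not even a $1$-FTRS. The ``main obstacle'' you flag at the end---controlling $|E_G(A,\overline{A})|$ so that the dominating important $(s,y)$-separator has size at most $k+\lambda$---is not a proof wrinkle to be smoothed out by uncrossing; it is precisely where the incorrect construction shows through, and no maximality refinement of $A$ can repair it, since in the example above there simply are no important $(s,y)$-separators in the enumerated family.

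The paper's argument differs in an essential way: it removes one edge at a time and \emph{recomputes} the important separators in the current graph before each deletion. The single-edge lemma says that if $e\in\delta^-(v)$ avoids every important $(s,v)$-separator of size at most $k+1$ in the \emph{current} graph, then deleting $e$ preserves the $k$-FTRS property. When the $(s,v)$-connectivity exceeds $k+1$ the important-separator collection is empty and every in-edge of $v$ is individually deletable---but only one at a time, because each deletion can lower the connectivity and bring new important separators into existence that protect some of the remaining in-edges. Your one-shot computation in $G$ never sees these separators. The paper also handles $\lambda>1$ by a clean factoring: it first builds a $(k+\lambda-1)$-FTRS for plain reachability and then observes, via a two-line min-cut argument, that any such subgraph is automatically a $(\lambda,k)$-FTRS; this is essentially the $F\cup C$ step you wrote, but applied to an $H$ that is already known to be a $(k+\lambda-1)$-FTRS, which is what makes the contradiction go through.
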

Our bound and running time, obtained by a direct application of known results on important separators, are slightly worse than those given by Baswana et al.~\cite{BaswanaSICOMP18}. 
%In this paper,~we~make~no attempt to improve these bounds.
%Rather, 
Our intent here is to provide a conceptual exposition of the results of Baswana et al~\cite{BaswanaSICOMP18} and further, introduce important separators as an algorithmic tool for problems on fault tolerant networks, and more generally for distributed~network~problems. Due to space constraints the discussion of other related works has been postponed to the appendix.

\paragraph{Related Work.} While a number of results are known about graph reachability for undirected graphs starting from the work of Nagamochi and Ibaraki~\cite{Nagamochi1992}, and more generally in the dynamic model\footnote{Here we have a sequence of edge insertions and deletions, as well as reachability queries, and we must efficiently update the data structure and answer the queries.}~\cite{DuanPettie16}, relatively little is known in the case of digraphs. 
Patrascu and Thorup~\cite{PatrascuThorup07} considered the problem of edge failures in undirected graphs. They constructed a data structure that processes a bath of $k$ edge failures in $\calO(k \log^2 n \log\log n)$ time, and then answers connectivity queries for any pair of vertices in $\calO(\log\log n)$ time. Subsequent results have led to a randomized data structure that uses almost-linear space (in the number of vertices) and correctly answers the queries with high probability~\cite{GibbKKT15}. For vertex failures in undirected graphs, Duan and Pettie~\cite{DuanPettie10} gave a data structure that can process a batch of $k$ vertex failures in $\calO(k^{2c+4}\log^2 n \log\log n)$ time and thereafter answer connectivity queries in $\calO(k)$ time. Here $c$ is a parameter of the algorithm offering a tradeoff between the running time and the space used. Recently, they improved this algorithm to process the $k$ failures in $\calO(k^3\log^3 n)$ time and with the same query time as before, while using $\calO(k m \log n)$ space~\cite{DuanPettie16,DuanPettie17}. We refer to~\cite{DuanPettie16} for further details.

In digraphs, other than~\cite{BaswanaSICOMP18}, an optimal oracle for dual fault tolerant reachability was proposed by Choudhary~\cite{ChoudharyICALP16}. Furthermore, fault tolerance in digraphs for shortest paths~\cite{DuanPettie09,BiloG0P16,GuptaK17,BodwinGPW17,BaswanaCHR18} and strongly connected components~\cite{GeorgiadisIP17,BaswanaCR17} were also studied. The various problems in the fault tolerant model for digraphs are subject to extensive ongoing research. We refer to~\cite{BaswanaSICOMP18,ChoudharyICALP16,DuanPettie16,BaswanaCR17,BaswanaCHR18} for a detailed discussion and further details.

% !TeX root = ./main.tex

\section{Preliminaries}
\label{sec:prelims}

Let $G$ be a digraph on $n$ vertices and $m$ edges.
For a subset of edges $X\subseteq E(G)$, we let $G-X$ denote the subgraph of $G$ with vertex set $V(G)$ and edge set $E(G) \setminus X$. We omit the braces when the set contains only a single edge, i.e.~$G -\{e\}$ is denoted by $G-e$. For an edge $e = (u,v)$, $u$ is called the tail of $e$ and $v$ is called the head of $v$. We denote these vertices by $\mathrm{tail}(e)$ and $\mathrm{head}(e)$, respectively.
For $R \subseteq V(G)$, $\delta^-(R)$ denotes the set of in-coming edges to $R$, i.e.~the set of edges of $G$ such that $\mathrm{tail}(e) \in V(G) \setminus R$ and $\mathrm{head}(e) \in R$. Similarly, $\delta^+(R)$ denotes the set of out-going edges from $R$.
Let $G$ be a digraph and $S$ and $T$ be two disjoint subsets of $V(G)$. A \emph{(directed) $(S,T)$-cut} is a subset $X$ of edges of $G$ such that there is no path from a vertex in $S$ to a vertex in $T$ in $G-X$. Any minimal $(S,T)$-cut $X$ can be expressed as $\delta^+(R)=X$ where $S \subseteq R \subseteq V(G) \setminus T$ is the set of vertices that are reachable from some vertex of $S$ in $G-X$. Hence, for any $(S,T)$-cut $X$, let $R_X$ denote the set of vertices that are reachable from $S$ in $G-X$. The set $R_X$ is called the \emph{reachability set of $X$}, and if $X$ is minimal, then $X = \delta^+(R_X)$.

An $(S,T)$-cut $X$ is an \emph{important $(S,T)$-separator} if there is no other $(S,T)$-cut $X'$ such that $|X'|\leq|X|$ and $R_X \subseteq R_{X'}$. (Observe that if $X$ is an important $(S,T)$-separator, then it is a minimal $(S,T)$-cut.)
Let $\bbX_k(S,T)$ denote the collection of all important $(S,T)$-separators in $G$ of size at most $k$. The following observation follows immediately from the definition of important separators.
\begin{observation}
    \label{obs:impsep-cover}
    Let $Y$ be any $(S,T)$-cut of size at most $k$ in $G$.
    Then, there is an important $(S,T)$-separator $X \in \bbX_k(S,T)$ such that $|X| \leq |Y|$ and $R_Y \subseteq R_X$.
\end{observation}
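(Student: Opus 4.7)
The plan is to prove Observation \ref{obs:impsep-cover} by a direct maximality argument over a finite collection of candidate cuts. Consider the set
\[
\mathcal{C} \;=\; \bigl\{\,X' : X' \text{ is an } (S,T)\text{-cut in } G,\ |X'|\leq|Y|,\ R_Y \subseteq R_{X'}\,\bigr\}.
\]
This set is non-empty, because $Y$ itself lies in $\mathcal{C}$, and every reachability set $R_{X'}$ sits inside the finite set $V(G)\setminus T$, so the partial order on $\mathcal{C}$ induced by inclusion of reachability sets admits a maximal element. I pick $X \in \mathcal{C}$ whose reachability set $R_X$ is inclusion-maximal among $\{R_{X'} : X' \in \mathcal{C}\}$ and propose this $X$ as the important separator claimed by the observation.

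Two of the three required properties of $X$ then come for free from membership in $\mathcal{C}$: we immediately have $R_Y \subseteq R_X$ and $|X|\leq |Y|$, and the assumed bound $|Y|\leq k$ upgrades the latter to $|X|\leq k$, which is what is needed to place $X$ in $\bbX_k(S,T)$ once $X$ is shown to be important. The remaining step is the importance of $X$, which I intend to verify by contradiction. Suppose some $(S,T)$-cut $X''$ satisfies $|X''|\leq|X|$ and $R_X \subsetneq R_{X''}$. Then
\[
|X''|\leq |X|\leq |Y| \qquad\text{and}\qquad R_Y \subseteq R_X \subsetneq R_{X''},
\]
so $X''$ itself lies in $\mathcal{C}$ and its reachability set strictly contains $R_X$, contradicting the choice of $X$. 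Hence no such $X''$ exists, so $X$ is an important $(S,T)$-separator.

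I do not anticipate any real obstacle here; the entire proof is just an extremal argument over a finite poset. The only interpretive sanity check is that the definition of ``important $(S,T)$-separator'' in the preliminaries is to be read, in the standard parameterized-complexity sense, as ruling out a \emph{distinct} cut $X'$ with $|X'|\leq|X|$ and $R_X \subsetneq R_{X'}$; this is precisely the hypothesis exploited in the maximality step. If one also wishes $X$ to be a minimal $(S,T)$-cut, one can replace $X$ by $\delta^+(R_X)$ at the end without changing the reachability set or increasing the size, so the argument carries over unchanged.
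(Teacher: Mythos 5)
Your extremal argument is correct and is exactly the standard way to substantiate this observation; the paper itself offers no explicit proof (it declares the statement ``follows immediately from the definition''), and your proof makes precise the implicit argument: take a cut whose reachability set is inclusion-maximal among all cuts of size at most $|Y|$ whose reachability set contains $R_Y$, and note that any witness against its importance would itself belong to the candidate family, contradicting maximality. Since $|X|\leq|Y|\leq k$, membership in $\bbX_k(S,T)$ follows once importance is established, and the finiteness of the poset guarantees a maximal element exists, so the skeleton is sound.

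One point deserves more care than you give it. You relegate to a ``sanity check'' the decision to read the paper's definition with strict inclusion ($R_X \subsetneq R_{X'}$), and you present the replacement of $X$ by $\delta^+(R_X)$ as optional. Under the paper's \emph{literal} definition---no \emph{other} cut $X'$ with $|X'|\leq|X|$ and $R_X \subseteq R_{X'}$, non-strict---that replacement is not cosmetic but necessary: if your maximal $X$ happens to be a non-minimal cut, then $\delta^+(R_X)$ is a distinct cut of strictly smaller size with the same reachability set, so $X$ is \emph{not} important in the paper's sense (this is precisely why the paper parenthetically notes that importance implies minimality). After replacing $X$ by $\hat{X} = \delta^+(R_X)$, both cases close: any distinct cut $X''$ with $|X''|\leq|\hat{X}|$ and $R_{X''} = R_{\hat{X}}$ would satisfy $X'' \supseteq \delta^+(R_{X''}) = \hat{X}$ and hence equal $\hat{X}$, while $R_{\hat{X}} \subsetneq R_{X''}$ is excluded by your maximality argument, since $|\hat{X}| \leq |X| \leq |Y|$ keeps $\hat{X}$, and would keep $X''$, inside the candidate family $\mathcal{C}$. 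So your proof is complete as written only because the final remark is included; you should promote it from an aside to a step of the argument.
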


We have the following result on the collection ${\mathbb X}_k(S,T)$.

\begin{lemma}[Theorem 8.36~\cite{PCBook}]
    \label{thm:impsep}
    The cardinality of ${\mathbb X}_k(S,T)$ is upper bounded by $4^k$. Furthermore, ${\mathbb X}_k(S,T)$ can be computed in time $\calO(|\bbX_k(S,T)|k^2(m+n))$.
\end{lemma}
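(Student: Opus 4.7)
The plan is to establish both the cardinality bound and the enumeration running time in one go, via a recursive branching algorithm whose tree has at most $4^k$ leaves, each producing exactly one element of $\bbX_k(S,T)$. Let $\mu = \mu(S,T)$ denote the size of a minimum $(S,T)$-cut. If $\mu > k$ then $\bbX_k(S,T) = \emptyset$, and if $\mu = 0$ then $\bbX_k(S,T) = \{\emptyset\}$; otherwise the algorithm recurses. The induction is on the potential $\Phi = 2k - \mu \geq 0$, which will yield both a leaf bound of $2^\Phi \leq 4^k$ and an amortized cost of $\calO(k^2(m+n))$ per recursive call.

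The central object is the \emph{farthest} minimum $(S,T)$-cut $C^* = \delta^+(R^*)$: the unique minimum $(S,T)$-cut whose reachability set $R^*$ is inclusion-maximal, well defined because the reachability sets of minimum $(S,T)$-cuts form a lattice under $\cup$ and $\cap$. Given any maximum $(S,T)$-flow of value $\mu$, the set $R^*$ is extracted by a single reverse-reachability pass in the residual graph, so computing $C^*$ costs $O(k(m+n))$. The first use of $C^*$ is a structural lemma proved by submodularity: for every $X \in \bbX_k(S,T)$ one has $R^* \subseteq R_X$. Indeed, with $A = R_X$ and $B = R^*$, the inequality $|\delta^+(A \cap B)| + |\delta^+(A \cup B)| \leq |X| + \mu$, together with the fact that $\delta^+(A \cap B)$ is an $(S,T)$-cut of size $\geq \mu$, forces $|\delta^+(A \cup B)| \leq |X|$; importance of $X$ then rules out $A \cup B \supsetneq A$, hence $B \subseteq A$.

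Now pick any edge $e = (u,v) \in C^*$; since $u \in R^* \subseteq R_X$, every $X \in \bbX_k(S,T)$ satisfies either $e \in X$, or $e \notin X$ and $v \in R_X$. This gives two recursive branches: (a) recurse on $(G-e,\,S,\,T,\,k-1)$ and prepend $e$; (b) recurse on $(G,\,S \cup \{v\},\,T,\,k)$. Each important separator is produced at exactly one leaf. Branch (a) drops $k$ by one and $\mu$ by at most one, so $\Phi$ strictly decreases. For branch (b), I would prove the strict monotonicity $\mu(S \cup \{v\}, T) > \mu$ by a second submodularity argument: assume for contradiction that $D$ is a min $(S \cup \{v\}, T)$-cut of size $\mu$ with reachability set $R_D$ from $S \cup \{v\}$; then submodularity applied to $R^*$ and $R_D$ forces $E_0 = \delta^+(R^* \cup R_D)$ to be a min $(S,T)$-cut of size exactly $\mu$. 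Since $u \in R^*$ and $v \in R_D$, the edge $e$ lies inside $R^* \cup R_D$ and so is not in $E_0$, and since every edge with both endpoints in $R^*$ is likewise absent from $E_0$, one obtains a path from $S$ to $u$ inside $R^*$ extended by $e$ to $v$ in $G - E_0$; this places $v$ in the reachability set of $E_0$ from $S$, but that reachability set is a min $(S,T)$-cut reachability set, hence contained in $R^*$, contradicting $v \notin R^*$. Thus $\Phi$ strictly decreases in branch (b) as well.

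The recurrence $T(\Phi) \leq 2 T(\Phi - 1)$ with $T(0) = 1$ yields $|\bbX_k(S,T)| \leq 2^\Phi \leq 4^k$. For the running time, the tree has depth $O(k)$ and at most $|\bbX_k(S,T)|$ leaves, hence $O(k \cdot |\bbX_k(S,T)|)$ nodes overall; each node performs a single max-flow of value at most $k+1$, which costs $O(k(m+n))$ by Ford--Fulkerson augmentation plus the reverse-reachability step that extracts $R^*$. Multiplying gives the claimed $\calO(|\bbX_k(S,T)| \cdot k^2 (m+n))$. The delicate step, and the reason for choosing the \emph{farthest} minimum cut rather than an arbitrary one, is the strict increase $\mu(S \cup \{v\}, T) > \mu(S,T)$ in branch (b); with an arbitrary minimum cut the potential $\Phi$ would stall and the count could blow up.
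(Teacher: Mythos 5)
This lemma is not proved in the paper at all --- it is imported verbatim as Theorem 8.36 of Cygan et al.~\cite{PCBook}, and your proposal is a correct reconstruction of exactly that textbook argument: branch on an edge $e=(u,v)$ of the farthest minimum cut $C^*=\delta^+(R^*)$, using the submodularity lemma $R^*\subseteq R_X$ for every important separator $X$, with the potential $2k-\mu$ giving the $4^k$ leaf bound and one $O(k(m+n))$ max-flow per node giving the stated running time. Your only deviation from the standard proof --- recursing on $(G, S\cup\{v\},T,k)$ rather than $(G, S\cup R^*\cup\{v\},T,k)$ --- is harmless, since your second submodularity argument correctly establishes the strict increase $\mu(S\cup\{v\},T)>\mu(S,T)$ that the textbook gets from the larger source set.
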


In this paper, we describe the construction of a $(\lambda,k)$-FTRS with respect to edge failures only, because any vertex failure can be modeled by an edge failure: Split every vertex $v$ into an edge $(v_{in},v_{out})$, where the incoming and outgoing edges
of $v$ are respectively directed into $v_{in}$ and directed out of $v_{out}$~\cite{BaswanaSICOMP18}.

\section{A simple algorithm for $(\lambda,k)$-FTRS}
\label{sec:k-ftss}

Let us first consider the case where $\lambda=1$, i.e.~the construction of a $k$-FTRS. Let $v \in V(G) \setminus \{s\}$ be an arbitrary vertex, and let $\bbX(v)$ denote the collection of all important $(s,v)$-separators in $G$ of size at most $k+1$.
By Lemma~\ref{thm:impsep}, there are at most $(k+1)4^{k+1}$ edges in the union of all such important separators. We have the following claim.

\begin{lemma}
    \label{lem:one-edge}
 Let $e \in \delta^-(v) \setminus (\bigcup_{X \in \bbX(v)} X)$. Then, $G-e$ is a $k$-FTRS of $G$.
\end{lemma}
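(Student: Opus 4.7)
The plan is to verify the defining property of a $k$-FTRS: for every edge set $F$ of size at most $k$ and every vertex $u$, $u$ is reachable from $s$ in $G-F$ iff $u$ is reachable from $s$ in $(G-e)-F$. One direction is immediate since $(G-e)-F$ is a subgraph of $G-F$. So I would focus on the converse and argue by contradiction: assume some $u$ is reachable in $G-F$ but not in $(G-e)-F$, fix a simple $s$-to-$u$ path $P$ in $G-F$, and extract a contradiction with the assumption $e \notin \bigcup_{X \in \bbX(v)} X$.

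First, I would observe that $P$ must use $e$ (otherwise $P$ survives in $(G-e)-F$), and hence $P$ visits $v$. Splitting $P = P_1 \cdot P_2$ at the occurrence of $v$ (where $P_1$ ends with the edge $e$ at $v$), the suffix $P_2$ from $v$ to $u$ cannot use $e$, since $P$ is simple and $e$ enters $v$ which is not revisited. Thus $P_2$ is a valid path in $(G-e)-F$. It follows that if $v$ were reachable from $s$ in $(G-e)-F$, concatenating such a path with $P_2$ would reach $u$ in $(G-e)-F$, contradicting our assumption. Therefore $v$ is not reachable from $s$ in $(G-e)-F$, which means $F \cup \{e\}$ is an $(s,v)$-cut in $G$ of size at most $k+1$.

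Next, I would invoke Observation~\ref{obs:impsep-cover} to obtain an important $(s,v)$-separator $X \in \bbX(v)$ with $|X| \le k+1$ and $R_{F \cup \{e\}} \subseteq R_X$. The key intermediate claim to establish is that $\mathrm{tail}(e) \in R_{F \cup \{e\}}$: the prefix of $P_1$ obtained by deleting its final edge $e$ is a path in $G-F$ from $s$ to $\mathrm{tail}(e)$, and since $P$ is simple the edge $e$ appears on $P$ only as the last edge of $P_1$, so this prefix avoids $e$ and lives in $(G-e)-F$. Hence $\mathrm{tail}(e) \in R_X$, while $\mathrm{head}(e) = v \notin R_X$, so by the minimality of $X$ we get $e \in \delta^+(R_X) = X$. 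This contradicts $e \notin \bigcup_{X' \in \bbX(v)} X'$ and completes the proof.

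The main subtlety I anticipate is being careful with the simplicity of $P$, which is used twice: once to certify that the suffix $P_2$ avoids $e$, and again to certify that the prefix of $P_1$ up to $\mathrm{tail}(e)$ avoids $e$. Beyond these two structural observations, the argument is a clean application of the standard covering property of important separators from Observation~\ref{obs:impsep-cover}.
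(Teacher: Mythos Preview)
Your proof is correct and follows essentially the same route as the paper's: reduce to the case where $v$ itself becomes unreachable, conclude that $F\cup\{e\}$ is an $(s,v)$-cut of size at most $k+1$ with $\mathrm{tail}(e)$ in its reachability set, and then contradict Observation~\ref{obs:impsep-cover}. Your version is in fact more explicit than the paper's in justifying the two uses of path simplicity, but the overall argument is the same.
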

\begin{proof} Suppose not, and consider a set $F$ of at most $k$ edges and a vertex $w \in V(G)$ such that $w$ is unreachable from $s$ in $(G-e)-F$, but there is a path from $s$ to $w$ in $G-F$. Since  $e \in \delta^-(v)$, it follows that $v$ is unreachable from $s$ in $(G-e)-F$, but there is a path from $s$ to $v$ in $G-F$. 
Thus, $Y = F \cup \{e\}$ is an $(s,v)$-cut in $G$. Since $G$ and $G-e$ only differ on $e$, this implies that $e$ lies on every path from $s$ to $v$ in $G-F$.
We may conclude the following: (a) $Y$ is an $(s,v)$-cut in $G$ of size at most $k+1$; (b) the vertex $u:=\mathrm{tail}(e)$ belongs to the reachability set of $Y$, i.e.~$u \in R_Y$.

Now, consider the $(s,v)$-cut $Y$ in the graph $G$ and the collection $\bbX(v)$. Since $e \notin \bigcup_{X \in \bbX(v)} X$, i.e.~$e \notin X$ for any $X \in \bbX(v)$, and $e = (u,v)$, it follows that $u \notin R_X$ for any $X \in \bbX(v)$. However, $u \in R_Y$ and $Y$ is $(s,v)$-cut of cardinality $k+1$ in $G$. 
This is a contradiction to Observation~\ref{obs:impsep-cover}.
Hence, $G-e$ is a $k$-FTRS of $G$.
\end{proof}

The above lemma can be turned into an iterative algorithm that gradually bounds the in-degree of each vertex in the graph. 
Let $\alpha = (k+1)4^{k+1}$ denote the upper-bound on $|\delta^-(v) \cap (\bigcup_{X \in \bbX(v)} X)|$ for any vertex $v \in V(G) \setminus \{s\}$.

\begin{algorithm}\label{algo}
\caption{An algorithm to compute a $k$-FTRS of a digraph $G$ with a source vertex $s$.}
\begin{algorithmic}[1]
    \Procedure {FTRS}{$G,s,k$}
    \State Delete all incoming edges of the source vertex $s$ in $G$.
    \While{there exists $v \in V(G) \setminus \{s\} \text{ such that } |\delta^-(v)| > \alpha$}
    \State Compute $\bbX(v)$ via Lemma~\ref{thm:impsep}.
    \State Pick an edge $e \in \delta^-(v) \setminus (\bigcup_{X \in \bbX(v)} X)$ and delete it.%remove it from $G$.
    \EndWhile
    \EndProcedure
\end{algorithmic}    
\end{algorithm}

The correctness of the above algorithm follows from Lemma~\ref{lem:one-edge}. Furthermore, it is clear that this algorithm terminates once the in-degree of every vertex is upper bounded by $\alpha$. Hence, it runs in time $\calO(4^kk^2 m(m+n))$. This gives us the following corollary.
\begin{corollary}\label{cor}
    There is a $k$-FTRS where each vertex has in-degree at most $(k + 1) 4^{k + 1}$.
    Further, this graph can be constructed in time $\calO(4^kk^2(m+n)m)$.
\end{corollary}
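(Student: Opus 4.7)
The plan is to verify that Algorithm~\ref{algo} realizes the claimed bounds, by decoupling three things: (i) invariance of the $k$-FTRS property under each iteration of the main loop, (ii) the validity of line 5 (that a deletable edge always exists when $|\delta^-(v)|>\alpha$), and (iii) the cost per iteration together with the number of iterations.

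For correctness, I would argue by induction on the iteration count that the current graph $G'$ is always a $k$-FTRS of the original graph $G$. The base case is trivial: $G$ is a $k$-FTRS of itself, and deleting incoming edges of $s$ in the preprocessing step preserves this, because no path from $s$ uses an edge entering $s$. For the inductive step, suppose $G'$ is a $k$-FTRS of $G$; then for the vertex $v$ picked in the loop, applying Lemma~\ref{lem:one-edge} to $G'$ (with the important $(s,v)$-separators computed in $G'$) shows that $G'-e$ is a $k$-FTRS of $G'$. Composition of the fault-tolerant property (a $k$-FTRS of a $k$-FTRS is a $k$-FTRS) then gives the claim.

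For the in-degree bound and well-definedness of line~5, the key observation is that $\bbX(v)$ satisfies $|\bbX(v)|\le 4^{k+1}$ by Lemma~\ref{thm:impsep}, and every $X\in\bbX(v)$ has $|X|\le k+1$, so
\[
\Bigl|\bigcup_{X\in\bbX(v)} X\Bigr|\le (k+1)\cdot 4^{k+1}=\alpha.
\]
Hence whenever $|\delta^-(v)|>\alpha$, the set $\delta^-(v)\setminus\bigcup_{X\in\bbX(v)}X$ is nonempty, so line~5 can always be executed. Since the loop terminates only when every $v\in V(G)\setminus\{s\}$ has $|\delta^-(v)|\le\alpha$, the output graph meets the claimed in-degree bound (and the source $s$ has in-degree $0$ by preprocessing).

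For the running time, each iteration of the while loop invokes Lemma~\ref{thm:impsep} at cost $\calO(4^{k+1}k^2(m+n))=\calO(4^k k^2(m+n))$ and then deletes exactly one edge, so the total number of iterations is at most $m$. This yields an overall running time of $\calO(4^k k^2 m(m+n))$, matching the stated bound. The main (minor) obstacle is really the bookkeeping step of noting that the important separators must be recomputed in the current graph at each iteration rather than once in $G$, so that Lemma~\ref{lem:one-edge} actually applies; once this is observed, the three ingredients above assemble into the corollary with no further work.
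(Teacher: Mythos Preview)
Your proposal is correct and follows essentially the same approach as the paper: the paper's own argument for the corollary simply cites Lemma~\ref{lem:one-edge} for correctness, observes that the loop terminates once every in-degree is at most $\alpha$, and states the running time bound. Your write-up just makes explicit the details the paper leaves implicit---the inductive invariant, the transitivity of the $k$-FTRS relation, the nonemptiness of $\delta^-(v)\setminus\bigcup_{X\in\bbX(v)}X$, and the per-iteration cost---so there is no genuine methodological difference.
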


The following simple lemma extends the above construction (and also the construction of Baswana et al.~\cite{BaswanaSICOMP18}) to any value of $\lambda$.
\begin{lemma}\label{lem:lambda}
    Let $H$ be a $(k+\lambda-1)$-FTRS of $G$. Then, $H$ is also a $(k,\lambda)$-FTRS of $G$.
\end{lemma}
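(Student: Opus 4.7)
The plan is to reduce the multi-path condition to a family of single-path reachability conditions via Menger's theorem and then invoke the $(k+\lambda-1)$-FTRS hypothesis on each. The key arithmetic observation is that deleting $F$ (of size at most $k$) followed by any $\lambda-1$ further edges amounts to deleting at most $k+\lambda-1$ edges in total, which is exactly the fault budget the hypothesis on $H$ can absorb. By Menger's theorem, for any digraph $G'$ there are $\lambda$ edge-disjoint $(s,v)$-paths in $G'$ if and only if for every edge set $F' \subseteq E(G')$ with $|F'| \leq \lambda - 1$, the vertex $v$ is still reachable from $s$ in $G' - F'$.

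Fix any $F \subseteq E(G)$ with $|F| \leq k$ and any $v \in V(G)$. For the forward direction, suppose $s$ has $\lambda$ edge-disjoint paths to $v$ in $G - F$. Pick an arbitrary $F'' \subseteq E(H) \setminus F$ with $|F''| \leq \lambda - 1$; to conclude via Menger, it suffices to show $v$ is reachable from $s$ in $(H-F) - F'' = H - (F \cup F'')$. Since $|F \cup F''| \leq k + \lambda - 1$, the $(k+\lambda-1)$-FTRS property of $H$ gives that reachability of $v$ from $s$ in $H - (F \cup F'')$ is equivalent to reachability in $G - (F \cup F'') = (G - F) - F''$. The latter holds by applying Menger's theorem to the $\lambda$ edge-disjoint paths assumed in $G - F$, because $|F''| \leq \lambda - 1$.

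The reverse direction is symmetric, with one small bookkeeping step. Suppose $s$ has $\lambda$ edge-disjoint paths to $v$ in $H - F$, and pick any $F' \subseteq E(G) \setminus F$ with $|F'| \leq \lambda - 1$; we want $v$ reachable from $s$ in $G - (F \cup F')$. Let $F''' = F' \cap E(H)$, so $|F'''| \leq \lambda - 1$ and $H - (F \cup F') = H - (F \cup F''')$, because edges of $F' \setminus E(H)$ are not present in $H$. By Menger's theorem applied inside $H - F$, $v$ is reachable from $s$ in $H - (F \cup F''')$, and hence in $H - (F \cup F')$. Since $|F \cup F'| \leq k + \lambda - 1$, the FTRS property of $H$ transfers this reachability to $G - (F \cup F')$, and Menger's theorem then delivers the $\lambda$ edge-disjoint paths in $G - F$.

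I do not anticipate a genuine obstacle here: the proof is a clean combination of Menger's theorem with the defining property of a $(k+\lambda-1)$-FTRS. The only thing worth flagging is that Menger's characterization is universally quantified over the $\lambda - 1$ extra deletions, so both directions genuinely need the equivalence in the FTRS definition to hold \emph{uniformly} for every fault set of size up to $k + \lambda - 1$, which is exactly the form of the hypothesis.
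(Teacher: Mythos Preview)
Your proof is correct and uses essentially the same idea as the paper: reduce the $\lambda$-connectivity requirement to single-path reachability via Menger's theorem, then absorb the extra $\lambda-1$ deletions into the fault budget of the $(k+\lambda-1)$-FTRS hypothesis. The paper phrases this by contradiction using a minimum $(s,v)$-cut $X$ in $H-F$ (so $|X|\le\lambda-1$) and observing that $F\cup X$ separates $s$ from $v$ in $H$ but not in $G$; your forward direction is exactly the direct (contrapositive) form of that argument. One small remark: your reverse direction, while correct, is unnecessary work, since $H$ is a spanning subgraph of $G$ and hence $\lambda$ edge-disjoint $(s,v)$-paths in $H-F$ are automatically such paths in $G-F$.
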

\begin{proof}
    Suppose not. Consider a vertex $v \in V(G) \setminus \{s\}$
    and a set of $F$ of at most $k$ edges such that the following holds: there are $\lambda$ edge-disjoint paths from $s$ to $v$ in $G-F$, but there is no such collection of paths in $H-F$. Consider a minimum $(s,v)$-cut $X$ in $H-F$, and observe that $|X| \leq \lambda - 1$. It follows that $Y = F \cup X$ is an $(s,v)$-cut in $H$ of size at most $k+\lambda - 1$. However, $Y$ is not an $(s,v)$-cut in $G$.  This is a contradiction to the fact that $H$ is a $(k+\lambda-1)$-FTRS of $G$. Therefore, $H$ must be a $(\lambda,k)$-FTRS of $G$.
\end{proof}

The proof of Theorem~\ref{thm:k-ftss} follows from Corollary~\ref{cor} and Lemma~\ref{lem:lambda}.

%The above lemma can be turned into a iterative algorithm that gradually bounds the number of incoming edges of each vertex in the input graph. It is clear that this algorithm runs in time $\calO(4^{\lambda+k}(\lambda+k)^2 m(m+n))$. This proves Theorem~\ref{thm:k-ftss}.

\bibliographystyle{plain}
\bibliography{references}

\end{document}